\documentclass{article}
\pdfoutput=1
\usepackage{latexsym}
\usepackage{amssymb}
\usepackage{url}
\usepackage{colortbl}

\def\false{\mbox{\textit{false}}}
\def\iff{\mbox{ iff }}
\def\metaequiv{\iff}

\def\pre{\mbox{pre}}
\def\post{\mbox{post}}

\def\true{\mbox{\textit{true}}}

\newtheorem{definition}{Definition}[section]

\newtheorem{example}{Example}[section]

\newtheorem{proposition}[definition]{Proposition}
\newenvironment{proof}{\noindent\textbf{Proof}\itshape}{~\hfill~$\Box$ \newline\newline}
\newtheorem{corollary}[definition]{Corollary}

\title{On the difference between checking\\ integrity constraints before or after updates}

%
%
\begin{document}
\author{Davide Martinenghi\\\\
Politecnico di Milano\\
Piazza Leonardo da Vinci 32\\
20133 Milano, Italy\\
\url{davide.martinenghi@polimi.it}}
\date{}

\maketitle

\begin{abstract}
Integrity checking is a crucial issue, as databases change their instance all the time and therefore need to be checked continuously and rapidly.
Decades of research have produced a plethora of methods for checking integrity constraints of a database in an incremental manner.
However, not much has been said about \emph{when} to check integrity.
In this paper, we study the differences and similarities between checking integrity before an update (a.k.a. pre-test) or after (a.k.a. post-test) in order to assess the respective convenience and properties.
\end{abstract}

\section{Introduction}\label{sec:intro}

Integrity checking has been 
a perennial topic in almost all database conferences, journals and
research labs. Very large quantities of research activities and
publications testify to the importance of the issue. The motivation
which has stimulated these investigations is that integrity checking
is practically unfeasible for significant amounts of stored data
without a dedicated approach to optimize the process.
Progress has been made with extensions of basic approaches to
deductive, object-relational, XML-based, distributed and other kinds
of advanced database technology, as surveyed in \cite{MCD:IGP2006}. However,
what has not changed much are the fundamental ideas that are already
present in the seminal paper \cite{Nic82}.

The basic idea is that, in most cases, a simplified form of the set
of integrity constraints imposed on the database can be obtained
from a given update (or just an update schema) and the current state
of the database (or just the database schema). Thus, integrity,
which is supposed to be an invariant of all possible database
states,
is checked upon each update request, which in turn becomes effective
if the check yields that integrity is not violated. Here,
``simplified'' essentially means more efficiently evaluated at update
time.
Of course, efficiency is not unequivocally measurable. However, the
number of stored facts to be retrieved for constraint evaluation is
a good rule of thumb. Another such rule is the number of literals in
the simplified form. Also, the minimization (reduction or avoidance)
of repair costs
after having detected integrity violation is a factor to be
considered when assessing the efficiency of integrity checking.
To establish the new state is usually considered less costly than to
undo it, but for concurrent transactions, more so in distributed and
even more so in replicated databases, establishing the new state is
a sizable and non-negligible cost factor (think of the concurrency
control, management and communication rounds needed for distribution
and different replication strategies), while roll-backs are less of
a problem since they can be taken care of and optimized by the DBMS,
making use of its transaction logs.

Integrity checking methods may differ in some of the assumptions.
For instance, in \cite{Nic82} and in most of the publications on the same
subject that came after it, a categorical premise for the
correctness of the simplification approach has been that the
constraints to be checked 
for a given update $U$ are supposed to be satisfied in the ``old"
state, i.e., the database state given when $U$ is requested.
This assumption has been relaxed in~\cite{MT:TKDE2011}, thereby introducing the notion of inconsistency-tolerant integrity checking.
Other different assumptions may regard the class of integrity constraints and updates that are supported by the methods, as surveyed in~\cite{MCD:IGP2006}.

Unlike previous surveys, in this paper we study the main differences between approaches that require to check integrity after all updates have been applied and those that check integrity before the updates.
We shall see that these checks are generally non-interchangeable, unless the updates are of a specific kind.

\section{Preliminaries}
\label{sec:integrity-constraints-simplified}

In this section, we recapitulate, classify and discuss the main
characteristics of the simplification approach to integrity
checking. We fix basic
definitions and terminology, and
thus the framework into which our formalizations in the remainder
are cast. In Section \ref{subsec:simplifications} we introduce
abstract notions of various classes of simplifications and discuss
them in Section \ref{subsec:pre-tests-vs-post-tests}.

Throughout, we refer to the relational framework of deductive
databases,
i.e., relational databases with possibly recursive view definitions
described in clause form \cite{AHV95,KH:VLDBJ94}.
Thus, a {\em database} consists of a set of \emph{facts}
and a set of \emph{rules}, i.e.,
\emph{tuples} and \emph{views}, 
in the terminology of the relational model.

An {\em integrity constraint} expresses a semantic invariant, i.e.,
a condition supposed to hold in each state of the database. In
general, it can be expressed by any closed first-order logic formula
in the language of the database on which it is imposed.
Two kinds of normal form representations of integrity constraints
which both incur no loss of generality are prominent in the
literature: prenex normal form and denial form. The former, as
defined and used, e.g., in \cite{Nic82, LST87}, has all quantifiers
moved leftmost and all negation symbols moved innermost, by
equivalent rewritings of the original formula. The latter, as
defined and used, e.g., in \cite{SK88,KSS87}, has the form of
datalog clauses with empty head, expressing that, if their condition
is satisfied, then integrity is violated, and may need dedicated
view definitions to define these conditions by recurring on database
facts.
An 
\emph{integrity theory} 
is a finite set of integrity constraints, 
to be thought of as being imposed on some database.

For simplicity, we limit ourselves to databases that have a
unique standard model and no ``unknown" facts (i.e., each fact is
either true or false in the model), e.g., stratified databases,
and assume that database semantics is defined 
by this model.

For a closed formula $W$, we write $D\models W$ (resp.,
$D\not\models W$) to indicate that $W$ evaluates to $\true$ (resp.,
$\false$) in $D$'s standard model.
For a set of formulas $\Gamma$, we write $D\models \Gamma$ (resp.,
$D\not\models \Gamma$) to indicate that for every (resp., some)
formula $W\in\Gamma$ we have $D\models W$ (resp., $D\not\models W$).
If $W$ is an integrity constraint and $\Gamma$ an integrity theory,
it is usual to also say that $D$ {\em satisfies} (resp., {\em
violates}) $W$ and $\Gamma$, respectively. An equivalent terminology
is defined as follows.
\begin{definition}[Consistency]
A database $D$ is {\em consistent} with an integrity theory $\Gamma$
iff $D\models\Gamma$.
\end{definition}
Informally, we have already spoken of database states. More
formally, database states are determined by atomically executed
updates. An \emph{update} $U$ is a mapping $U:{\cal D} \mapsto {\cal
D}$, where ${\cal D}$ is the space of databases
as determined by a
fixed, sufficiently rich language which need 
not be extended by any update. For simplicity, we only consider
updates that may involve facts and rules in this paper, but no
integrity constraints, which are thought of as immutable.
For convenience, for any database $D$,
let $D^U$ denote the new database 
obtained by 
applying update $U$ on $D$. 

\section{Kinds of simplifications}
\label{subsec:simplifications}

Traditionally,
the integrity checking problem asks, given a set of integrity
constraints $\Gamma$, a database $D$ consistent with $\Gamma$, and
an update $U$,
whether $D^U\models\Gamma$ holds, i.e., whether the new database is
consistent with (i.e., satisfies) the integrity constraints.
However,
evaluating $\Gamma$ in $D^U$ may be prohibitively expensive. 
So, a reformulation of the problem is called for, trying to take
advantage of the incrementality of updates. Traditionally, all such
reformulations have been made
under the assumption that the old state 
is consistent.

We will discuss two kinds of 
such reformulations 
that have commonly been 
dealt with in the literature.
Both determine an alternative integrity theory $\Upsilon$ (which by
itself is later called a simplification), the evaluation of which is
supposed to be simpler than to evaluate $\Gamma$, while yielding
equivalent results.
The first kind 
of such $\Upsilon$
is determined to be evaluated in the new 
state. 
\begin{definition}[Post-test]\label{def:post-test}
Let $\Gamma$ be an integrity theory and $U$ an update. An integrity
theory $\Upsilon$ is a \emph{post-test} of $\Gamma$ for $U$ whenever
$D^U\models\Gamma\mbox{ iff } D^U\models \Upsilon$ for 
every
database $D$ consistent with $\Gamma$.
\end{definition}
Clearly, $\Gamma$ itself is a post-test of $\Gamma$
for 
any update.
As indicated, one is interested in producing a post-test that is
actually ``simpler'' to evaluate than the original integrity
constraints. This
is traditionally 
achieved by exploiting
the fact 
that the old 
state $D$ is consistent with $\Gamma$, thus 
avoiding redundant checks
of cases that are already known to satisfy integrity.
Note that the process of integrity checking involving a post-test
consists in: executing the update, checking the post-test and, if it
fails to hold, correcting the database by
performing a repair action, i.e., a rollback 
and optionally a modification of the update
which won't violate integrity. 
Well-known post-test-based approaches are described in
\cite{Nic82,LT85,GM90,DBLP:conf/vldb/Olive91,LD98}.

The second kind of approach to deal with integrity checking
incrementally is to determine an integrity theory $\Sigma$
to be evaluated in the old state, i.e., to predict without actually
executing the update whether the new, updated state will be
consistent with the integrity constraints.
\begin{definition}[Pre-test]\label{def:pre-test}
Let $\Gamma$ be an integrity theory and $U$ an update. An integrity
theory $\Sigma$ is a \emph{pre-test} of $\Gamma$ for $U$ whenever
$D^U\models\Gamma\mbox{ iff } D\models \Sigma$ for 
every
database $D$ consistent with $\Gamma$.
\end{definition}
%
Here, not only the consistency of the old state $D$ with $\Gamma$ is
exploited, but also that inconsistent states can be prevented
without executing the update, and, thus, without ever
having to undo a violated new state. 
The integrity checking process involving a pre-test is therefore:
check whether the pre-test holds and, if it does, execute the
update. Examples of pre-test-based approaches are
\cite{Qia88,CM:FI2006}.
Note that, depending on the requirements of availability and
consistency of a given application, integrity checking with
pre-tests is possibly better suited for concurrent transaction
processing, particularly in distributed databases.

In the remainder, we refer to both post- 
and pre-tests 
as \emph{simplifications} of the original integrity theory.
It is tacitly assumed that
given simplifications
are, at least in significant classes of cases, indeed simpler to
evaluate than the original constraints. A characterization of
simplicity beyond what is mentioned in the introduction, e.g., by
formal cost models,
is out of the scope of this paper;
cf.~\cite{M:PHD2005} 
for a discussion.

Now, whatever the definition of simplicity, 
we cannot directly compare the evaluation cost
of a post-test 
with that of a pre-test 
since they refer to two different (viz. old and new) states.
Therefore,
it is desirable to have kindred 
reference pre- and post-tests for benchmarking given
simplifications.
Plain 
tests, as defined below, i.e., simplifications that do not exploit
the fact that the old state satisfies integrity,
may serve as such reference tests.
\begin{definition}[Plain test]\label{def:plain-test}
Let $\Gamma$ be an integrity theory and $U$ an update.

a) An integrity theory $\Sigma_0$ is a \emph{plain pre-test} of
$\Gamma$ for
$U$, denoted by 
$\pre_0^U(\Gamma)$, if the following holds: $D\models\Sigma_0 \mbox{
iff } D^U\models\Gamma$ for every database $D$.

b) An integrity theory $\Upsilon_0$ is a \emph{plain post-test} of
$\Gamma$ for $U$, denoted by $\post_0^U(\Gamma)$, if the following
holds: $D^U\models\Upsilon_0 \mbox{ iff } D^U\models\Gamma$  for
every database $D$.
\end{definition}
Clearly, $\Gamma$ is a plain 
post-test
 of itself
for 
any update.
Note that each 
plain test is also a simplification.
For any pre-test (resp., post-test),
it is therefore desirable that it be at least as simple
to evaluate 
as the corresponding plain test. 

It is straightforward to see that, for fixed $\Gamma$ and $U$, all
plain 
pre-tests of $\Gamma$ for 
$U$ are logically equivalent.
\begin{proposition}\label{pro:all-plain-pre-tests-equivalent}
Let $\Gamma$ be an integrity theory and $U$ an update. Then, for any
two 
plain
pre-tests $\Sigma'$ and $\Sigma''$ of $\Gamma$ for
$U$, we have $\Sigma'\equiv\Sigma''$.
\end{proposition}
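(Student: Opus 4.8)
The plan is to chain the two defining biconditionals of plain pre-tests through their common right-hand side, which mentions neither $\Sigma'$ nor $\Sigma''$. Concretely, first I would fix an arbitrary database $D$ in the fixed underlying language ${\cal D}$. Since $\Sigma'$ is a plain pre-test of $\Gamma$ for $U$, Definition~\ref{def:plain-test}(a) gives $D\models\Sigma'$ \iff $D^U\models\Gamma$. Since $\Sigma''$ is likewise a plain pre-test of $\Gamma$ for $U$, the same definition gives $D\models\Sigma''$ \iff $D^U\models\Gamma$. Transitivity of ``\iff'' then yields $D\models\Sigma'$ \iff $D\models\Sigma''$.

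Next, because $D$ was an arbitrary element of ${\cal D}$, the resulting equivalence holds for every database, and by the semantics of ``$\models$'' fixed in Section~\ref{sec:integrity-constraints-simplified} this is precisely the statement that $\Sigma'$ and $\Sigma''$ are satisfied by exactly the same databases, i.e., $\Sigma'\equiv\Sigma''$. Nothing more than unfolding definitions is involved.

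The one point that genuinely matters — and is really the content of the proposition — is that Definition~\ref{def:plain-test}(a) quantifies over \emph{every} database, with no consistency precondition, in contrast with Definition~\ref{def:pre-test}. It is this unrestricted quantification that allows the two biconditionals to be composed over all of ${\cal D}$; for ordinary (non-plain) pre-tests the same argument would only deliver equivalence relative to databases consistent with $\Gamma$, which need not be full logical equivalence. So I do not expect a real obstacle here: once the ``plain'' hypothesis is invoked for both tests, the conclusion is immediate, and the proof is essentially a two-line chain of equivalences.
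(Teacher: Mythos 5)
Your proof is correct and is essentially identical to the paper's: both apply Definition~\ref{def:plain-test}(a) to $\Sigma'$ and $\Sigma''$ and chain the two biconditionals through $D^U\models\Gamma$ by transitivity, for every database $D$. Your added remark about why the unrestricted quantification over all databases (absent any consistency precondition) is the crux is a fair observation, but the argument itself matches the paper's.
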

\begin{proof}
By applying definition \ref{def:plain-test} to $\Sigma'$ and
$\Sigma''$, one gets by transitivity that $D\models\Sigma'\iff
D\models\Sigma''$ for 
every
$D$, i.e., $\Sigma'\equiv\Sigma''$.
\end{proof}
Conversely, not all plain post-tests are logically equivalent.
As a counterexample,
take, e.g., $\Gamma=\{\leftarrow p(a)\}$ and let $U$ be the
insertion of $p(a)$. Then $\Upsilon=\false$ is a post-test of
$\Gamma$ for 
$U$ but 
$\Upsilon\not\equiv\Gamma$.

We conclude this section with an example of simplification of
integrity constraints.
\begin{example}\label{ex:conflict-of-interests}
Consider a database with the relations $rev(S,R)$ (submission $S$
assigned to reviewer $R$), $sub(S,A)$ (submission $S$ authored by
$A$) and $pub(P,A)$ (publication $P$ authored by $A$). Assume a
policy establishing that no one can review a paper of his/her
(former) coauthors. This is expressed by:
$$\begin{array}{rl}
\!\!\Gamma\!\!=\!\!\{&\!\!\!\!\leftarrow \!\!rev(S,R)\land sub(S,R),\\
\!\!& \!\!\!\!\leftarrow \!\!rev(S,R)\land sub(S,A)\land
pub(P,R)\land pub(P,A)\}
\end{array}$$
Let $U$ be the an update that inserts the facts $sub(c,a)$ and
$rev(c,b)$ into the database, where $a$, $b$, $c$ are some
constants.
A simplification of $\Gamma$ for $U$ (equivalent to what Nicolas'
method would output) is as follows:
$$\begin{array}{rl}
\Sigma=\{&\leftarrow sub(c,b),\\
& \leftarrow rev(c,a),\\
& \leftarrow pub(P,b)\land pub(P,a),\\
& \leftarrow sub(c,A)\land pub(P,b)\land pub(P,A),\\
& \leftarrow rev(c,R)\land pub(P,R)\land pub(P,a)\;\;\}
\end{array}$$
The simplified conditions given by $\Sigma$ can be read as follows:
\begin{itemize}
\item $b$ did not submit $c$
\item $a$ does not review $c$
\item $b$ is not coauthor of $a$
\item $b$ is not coauthor of an author of $c$
\item $c$ is not reviewed by a coauthor of $a$
\end{itemize}
These checks are much easier to execute than $\Gamma$, as they
greatly reduce the space of tuples to be considered by virtue of the
instantiation of variables with constants.
\end{example}

\section{Relationship between pre- and post-tests}
\label{subsec:pre-tests-vs-post-tests}
In this section we compare pre-tests and post-tests and discuss
their interchangeability.
Note that 
we do this without referring to any specific simplification method
or update language.

First, we show that, in general, a pre-test cannot be used as a
post-test, nor vice versa.

\begin{example}\label{ex:pre-and-post-test-differ-in-general}
Consider the integrity theory $\Gamma=\{\leftarrow p(a)\land q(b)\}$
and an update $U$ that exchanges the contents of $p$ and $q$. Then
$\Sigma=\{\leftarrow q(a)\land p(b)\}$ is a pre-test but clearly not
a correct post-test. Consider, e.g., a database
$D=\{p(a),p(b),q(a)\}$; we have $D\models\Gamma,
D^U\not\models\Gamma, D^U\models\Sigma$, i.e., it does not hold that
$D^U\models\Sigma \metaequiv D^U\models\Gamma$, although $D$ is
consistent with $\Gamma$. Similarly, $\Upsilon=\{\leftarrow
p(a)\land q(b)\}$ is a post-test, but not a pre-test, of $\Gamma$
for $U$.
\end{example}
This result is not surprising, since we allow for updates
representing any kind of transformation of the database, such as
swapping the contents of two relations.
We now introduce a class of updates that excludes an update such as
$U$ of example \ref{ex:pre-and-post-test-differ-in-general}.
\begin{definition}\label{def:idempotent-update}
An update $U$ is \emph{idempotent}\index{update!idempotent update}
if $D^U$ = $(D^U)^U$ for any database $D$.
\end{definition}
Idempotent updates capture additions, deletions and changes of
specific tuples, which are certainly among the most frequent kinds
of updates.
For idempotent updates, we can prove that a plain pre-test is also
always a valid plain post-test.
\begin{proposition}\label{pro:plain-pre-test=post-test-idempotent}
Let $\Gamma$ be an integrity theory and $U$ an idempotent update.
Then $D^U\models\pre_0^U(\Gamma)\metaequiv D^U\models\Gamma$ for any
database $D$, i.e., $pre_0^U(\Gamma)$ is a plain post-test of
$\Gamma$ for $U$.
\end{proposition}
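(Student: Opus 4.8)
The statement says: if $U$ is idempotent and $\Sigma_0=\pre_0^U(\Gamma)$ is a plain pre-test of $\Gamma$ for $U$, then $D^U\models\Sigma_0 \metaequiv D^U\models\Gamma$ for every database $D$. The natural approach is to instantiate the defining property of the plain pre-test (Definition \ref{def:plain-test}a) at the database $D^U$ rather than at $D$, and then use idempotence to collapse $(D^U)^U$ back to $D^U$. Concretely, the plain pre-test property reads $E\models\Sigma_0 \metaequiv E^U\models\Gamma$ for \emph{every} database $E$; this universal quantification is the crucial point, since plain tests (unlike ordinary pre-tests) are not restricted to consistent states.

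\smallskip

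First I would take an arbitrary database $D$ and set $E:=D^U$. Applying Definition \ref{def:plain-test}a to $E$ yields $D^U\models\Sigma_0 \metaequiv (D^U)^U\models\Gamma$. Next I would invoke Definition \ref{def:idempotent-update}, which gives $(D^U)^U = D^U$, so the right-hand side $(D^U)^U\models\Gamma$ is literally the same assertion as $D^U\models\Gamma$. Chaining the two gives $D^U\models\Sigma_0 \metaequiv D^U\models\Gamma$. Since $D$ was arbitrary, this holds for every database $D$, which is exactly the claim that $\pre_0^U(\Gamma)$ is a plain post-test of $\Gamma$ for $U$ (matching Definition \ref{def:plain-test}b with $\Upsilon_0=\Sigma_0$).

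\smallskip

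There is essentially no hard part here: the proof is a one-line substitution plus an application of idempotence. The only thing worth being careful about is that the argument genuinely relies on the plain pre-test being correct on \emph{all} databases, including the intermediate state $D^U$ which need not be consistent with $\Gamma$; an ordinary (consistency-restricted) pre-test would not support the substitution $E:=D^U$ unless one separately knew $D^U$ to be consistent. It is also worth remarking, as a sanity check, that idempotence is used in an essential way: without it $(D^U)^U$ and $D^U$ may differ, and indeed the swap update of Example \ref{ex:pre-and-post-test-differ-in-general} shows a plain pre-test that fails to be a plain post-test.
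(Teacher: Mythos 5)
Your proof is correct and follows essentially the same route as the paper's: both instantiate the plain pre-test property at the state $D^U$ (which is legitimate precisely because plain tests quantify over \emph{all} databases) and then use idempotence to identify $(D^U)^U$ with $D^U$. Your closing remark about why an ordinary, consistency-restricted pre-test would not support this substitution is exactly the point the paper makes implicitly with its parenthetical ``for any $D^U$ (and thus for any $D$)''.
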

\begin{proof}
Since $U$ is idempotent, i.e., $D^U = (D^U)^U$ for any $D$, we have
\begin{itemize}
\item[(1)] $D^U\models\Gamma\metaequiv (D^U)^U\models\Gamma$ for any
$D$.
\end{itemize}
Since $\pre_0^U(\Gamma)$ is a plain pre-test of $\Gamma$ wrt. $U$,
we have
\begin{itemize}
\item[(2)] $D^U\models\pre_0^U(\Gamma)\metaequiv
(D^U)^U\models\Gamma$ for any $D^U$ (and thus for any $D$).
\end{itemize}
\noindent By transitivity between (1) and (2) we obtain the thesis.
\end{proof}
More surprisingly, however, the converse does not hold, i.e., there
are plain post-tests that are not plain pre-tests. In general, even
for idempotent updates, there are pre-tests that are not post-tests
and post-tests that are not pre-tests.
\begin{proposition}\label{pro:pre-differ-post-for-idempotent}
Let $\Gamma$ be an integrity theory and $U$ an idempotent update.
Then
\begin{enumerate}
\item [(1)] there is a pre-test $\Sigma$ of $\Gamma$ for $U$ such
that it does not hold that $D^U\models\Sigma\metaequiv
D^U\models\Gamma$ for any database $D$ consistent with $\Gamma$,
i.e., $\Sigma$ is not a post-test of $\Gamma$ for $U$.
\item [(2)] there is a plain post-test $\Upsilon$ of $\Gamma$ for
$U$ such that it does not hold that $D\models\Upsilon\metaequiv
D^U\models\Gamma$ for any database $D$ consistent with $\Gamma$,
i.e., $\Upsilon$ is not a pre-test of $\Gamma$ for $U$.
\end{enumerate}
\end{proposition}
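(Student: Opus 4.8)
The plan is to settle both parts \emph{uniformly} in $\Gamma$ and $U$, exhibiting the two required tests as explicit functions of $\Gamma$, $U$ and a single witness, rather than through any one example. The constructions need only that $U$ can genuinely break integrity, i.e. that there is a database $D^*$ consistent with $\Gamma$ with $D^{*U}\not\models\Gamma$; I would adopt this non-degeneracy throughout. Some such hypothesis is unavoidable: for the identity update, which is idempotent, every pre-test is automatically a post-test and every plain post-test is automatically a pre-test, so both parts fail outright. In fact for part~(1) the hypothesis is \emph{necessary}, since if $U$ preserved $\Gamma$-consistency then any pre-test would be forced to hold on all consistent states and would thereby also be a post-test. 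I would therefore read the proposition over updates that can violate integrity — the only case of interest for incremental checking — fix such a $\Gamma$, such an idempotent $U$, and a witness $D^*$, and note that the constructions below depend on nothing else about the schema.

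For part~(1), the plan is to take the pre-test that coincides with the plain pre-test on consistent databases but is vacuously satisfied elsewhere, namely the single closed formula $\Sigma=\{\,(\bigwedge\Gamma)\rightarrow(\bigwedge\pre_0^U(\Gamma))\,\}$, which is a legitimate integrity theory. That $\Sigma$ is a pre-test is immediate: for $D$ consistent with $\Gamma$ the antecedent holds, so $D\models\Sigma\metaequiv D\models\pre_0^U(\Gamma)\metaequiv D^U\models\Gamma$ by Definition~\ref{def:plain-test}(a). That $\Sigma$ is \emph{not} a post-test I would read off the witness alone: evaluating $\Sigma$ in the new state $D^{*U}$, the antecedent $\bigwedge\Gamma$ fails, so $D^{*U}\models\Sigma$ holds vacuously while $D^{*U}\not\models\Gamma$; hence $D^{*U}\models\Sigma\not\metaequiv D^{*U}\models\Gamma$ at the consistent database $D^*$, which is exactly the required failure.

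For part~(2), the plan is to take $\Upsilon=\Gamma$, which is a plain post-test of $\Gamma$ for any update by the remark following Definition~\ref{def:plain-test}. To see that it is not a pre-test I would reuse the same witness: since $D^*$ is consistent we have $D^*\models\Gamma$, whereas $D^{*U}\not\models\Gamma$, so the pre-test equivalence $D^{*U}\models\Gamma\metaequiv D^*\models\Gamma$ reads $\false\metaequiv\true$ and fails, contradicting Definition~\ref{def:pre-test}.

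The hard part will not be the verifications, which are short, but getting the framing right: the two tests must be presented once and for all as uniform constructions in $\Gamma$, $U$ and the witness, so that nothing is tied to a particular schema, and the non-degeneracy must be stated cleanly, since for the identity update (idempotent) both parts genuinely fail. I would also emphasise that neither construction uses idempotence, and that this is precisely the content of the proposition: Proposition~\ref{pro:plain-pre-test=post-test-idempotent} shows idempotence forces the \emph{plain} pre- and post-tests to coincide, whereas the tests above witness that this coincidence does not survive the passage to arbitrary pre- and post-tests.
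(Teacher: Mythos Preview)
Your proof is correct, and your observation about the identity update is well taken: the universal reading of the proposition over all $\Gamma$ and all idempotent $U$ is indeed false, and some non-degeneracy hypothesis like the existence of a consistency-breaking witness $D^*$ is both necessary (for part (1), as you argue) and sufficient. Your constructions for (1) and (2) are clean and uniform in $\Gamma$, $U$ and the witness, and the verifications go through as written. One small caveat: your construction for (1) presupposes that a plain pre-test $\pre_0^U(\Gamma)$ exists as an integrity theory, which the paper uses notationally but never proves for arbitrary updates; you should either note this as an additional standing assumption or observe that for the update languages of interest (e.g., tuple insertions and deletions) the weakest precondition is first-order expressible.

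The paper takes a rather different and more modest route: it simply exhibits one concrete instance, namely $\Gamma=\{\leftarrow p(a)\}$, $U$ the insertion of $p(a)$, $D=\emptyset$, with $\Sigma=\{\leftarrow\,\sim\!p(a)\}$ for part (1) and $\Upsilon=\Gamma$ for part (2). In effect the paper proves only the existential reading (``there exist $\Gamma$, $U$, $\Sigma$\ldots''), which is all its later summary table needs, whereas you prove the universal reading under the stated non-degeneracy. Your approach buys generality and makes explicit that idempotence plays no role in either counterexample; the paper's approach buys concreteness and avoids any reliance on the existence of $\pre_0^U(\Gamma)$. Both arrive at the same $\Upsilon=\Gamma$ for part (2), but your $\Sigma$ in part (1) is a genuinely different witness from the paper's.
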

\begin{proof}

\noindent (1) Let $D$ be a state consistent with $\Gamma$. We have:
$$\begin{array}{rll}
D\models\Sigma&\metaequiv D^U\models \Gamma \mbox{ ($\Sigma$ pre-test)}\\
&\metaequiv (D^U)^U\models \Gamma \mbox{ ($U$ idempotent)
}\\
&\metaequiv D^U\models \Sigma, \mbox{ if $D^U\models\Gamma$
($\Sigma$ pre-test) }
\end{array}$$
So, the only possibility is a situation where $D\not\models\Sigma$,
$D^U\models\Sigma$, and $D^U\not\models\Gamma$, which happens, e.g.,
with $\Gamma = \{\leftarrow p(a)\}$, $\Sigma = \{\leftarrow
\sim$$p(a)\}$, $D=\emptyset$ and $U$ an update such that
$D^U=\{p(a)\}$.
To conclude the proof, we show that the chosen $\Sigma$ is a
pre-test. Indeed, for any $D$ such that $D\models\Gamma$, then
$D\not\models\Sigma$, and $D^U\not\models\Gamma$.

\noindent(2) The same $D$, $U$ and $\Gamma$ as in the previous point
can also be used for the second case by considering the plain
post-test $\Upsilon=\Gamma$. We have $D\models\Gamma$,
$D^U\not\models\Gamma$, $D\models\Upsilon$ and
$D^U\not\models\Upsilon$.
\end{proof}
Another interesting aspect regarding pre-tests and post-tests is
whether their evaluation is at all affected by the update.
Proposition \ref{pro:plain-pre-test=post-test-idempotent}
immediately implies that the evaluation of a plain pre-test is not
affected by the update, as stated in the following corollary.
\begin{corollary}\label{cor:plain-pre-test-resource-set-update-disjoint}
Let $\Gamma$ be an integrity theory and $U$ an idempotent update.
Then $D\models\pre_0^U(\Gamma)\metaequiv D^U\models\pre_0^U(\Gamma)$
for any $D$.
\end{corollary}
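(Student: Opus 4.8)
The plan is to obtain the claim by chaining two equivalences via transitivity of ``iff'', with the idempotence of $U$ entering only through the invocation of Proposition~\ref{pro:plain-pre-test=post-test-idempotent}.

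First I would recall the defining property of $\pre_0^U(\Gamma)$ as a plain pre-test: by Definition~\ref{def:plain-test}(a), we have $D\models\pre_0^U(\Gamma)\metaequiv D^U\models\Gamma$ for every database $D$. This is the ``left half'' of the intended chain. Next I would apply Proposition~\ref{pro:plain-pre-test=post-test-idempotent}, which (using exactly the idempotence of $U$) asserts that $\pre_0^U(\Gamma)$ is a plain post-test of $\Gamma$ for $U$, i.e., $D^U\models\Gamma\metaequiv D^U\models\pre_0^U(\Gamma)$ for every database $D$. This is the ``right half''.

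Composing the two equivalences, both of which hold for every database $D$, yields $D\models\pre_0^U(\Gamma)\metaequiv D^U\models\pre_0^U(\Gamma)$ for every $D$, which is precisely the statement of the corollary. I do not expect any real obstacle here: the argument is a one-line transitivity step. The only point worth a moment's care is the uniformity of the quantification---each of the two equivalences is stated ``for any database'', so they can be pasted together for the same $D$ without restriction---and the observation that idempotence is needed solely to license the appeal to Proposition~\ref{pro:plain-pre-test=post-test-idempotent}, not anywhere else in the derivation.
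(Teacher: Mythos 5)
Your proposal is correct and matches the paper's own proof: both chain the defining equivalence $D\models\pre_0^U(\Gamma)\metaequiv D^U\models\Gamma$ with the conclusion of Proposition~\ref{pro:plain-pre-test=post-test-idempotent} by transitivity, with idempotence entering only through that proposition. Nothing further is needed.
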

\begin{proof}
By definition of $\pre_0^U(\Gamma)$, we have
\begin{itemize}
\item[] $D\models\pre_0^U(\Gamma)\metaequiv D^U\models\Gamma$ for
any $D$,
\end{itemize}
and, by transitivity with the claim of proposition
\ref{pro:plain-pre-test=post-test-idempotent}, we have the thesis.
\end{proof}
However, this does not hold in general for pre-tests or (plain)
post-tests, as demonstrated in the example of the proof of
proposition \ref{pro:pre-differ-post-for-idempotent}.

The following table summarizes the results presented in this
section. We indicate with Pre$^U(\Gamma)$ (resp., Post$^U(\Gamma)$)
the set of all pre-tests (resp., post-tests) of integrity theory
$\Gamma$ for $U$, and use a $0$ subscript to indicate the set of all
plain pre-tests (resp., post-tests) of $\Gamma$ for $U$.
\vspace{0.2cm}

\setlength{\extrarowheight}{0.05cm}
\noindent\begin{center}\begin{tabular}{|l|c|c|} \hline
for any $\Gamma$ & any $U$ & $U$ idempotent\\
\hline
Pre$^U(\Gamma)\subseteq$Post$^U(\Gamma)$? & \textbf{no} & \textbf{no}\\
\hline
Post$^U(\Gamma)\subseteq$Pre$^U(\Gamma)$? & \textbf{no} & \textbf{no}\\
\hline
Pre$_0^U(\Gamma)\subseteq$Pre$^U(\Gamma)$? & \textbf{yes} & \textbf{yes}\\
\hline
Post$_0^U(\Gamma)\subseteq$Post$^U(\Gamma)$? & \textbf{yes} & \textbf{yes}\\
\hline
Pre$_0^U(\Gamma)\subseteq$Post$^U(\Gamma)$? & \textbf{no} & \textbf{yes}\\
\hline
\end{tabular}\end{center}
\setlength{\extrarowheight}{0cm}

\section{Related work}\label{sec:related}
%
Simplification of integrity constraints has been recognized by a
large body of research as a powerful technique for optimization of
integrity checking.
Several approaches to simplification 
require the update transaction to be performed
{\em before} the resulting state is 
checked for consistency with a
post-test~\cite{Nic82,LST87,SK88,GM90,DC94,LD98}.
%
%
As opposed to that,
to pre-test the feasibility of an update with respect to an
integrity theory allows for avoiding both the execution of the
update and, particularly, the restoration of the database state
before the update, which may be very costly.
%
Pre-test-based methods are, e.g.,
\cite{HMN84,DBLP:conf/sigmod/HsuI85,Qia88,LL96,LD98,CM:FI2006,DM:IGP2008,MCD:IGP2006,CM:LPAR2005,MC:ADBIS2005,MC:DEXA2005,M:ADBIS2004,M:FQAS2004,CM:LOPSTR2003,M:PHD2005,martinenghi2003simplification,CM:FoIKS2004,CM:LAAIC2006},
including a few industrial attempts, e.g.,
\cite{Carrico95,Benedikt02}.
Other methods provide simplifications that may require the
availability of both the old and the new state, assuming that the
database keeps track of the old state before committing an update,
\cite{S95,SS99}.

In \cite{GM90}, an adaptation of subsumption checking (called
\emph{partial subsumption}) is used to generate simplification as
the ``difference'' (called \emph{residual}) between an integrity
constraint and a clause representing an update.

Qian's method~\cite{Qia88} generates pre-tests for integrity
checking based on the observation that a simplified integrity
constraint can be regarded as a weakest precondition for having a
consistent updated state, in the same sense as in Hoare's
logic~\cite{H69,D76} for imperative languages, and by assuming
consistency of the database before the update.

Simplification of integrity constraints for update patterns
resembles the notion of program \emph{specialization} used in
partial evaluation, which is the process of creating a specialized
version of a given program (in this case, a general integrity
checker) with respect to known input data (here, the update), as
investigated in \cite{LD98}.

More generally, integrity checking can be seen as a special case of
materialized view maintenance: integrity constraints are defined as
views that must always remain empty for the database to be
consistent~\cite{310709,DBLP:journals/sigmod/DongS00}.

Simplification can also be obtained by resorting to decision
procedures for query
containment~\cite{Florescu:1998:QCC,Calvanese:1998:DQC,CDV03,DBLP:conf/pods/Bonatti04},
as shown in \cite{CM:FI2006}.

We intentionally did not do so before, but at this point is seems
worth mentioning that several simplification methods accept
instantiable or parameterizable {\em patterns} of
updates instead of specific updates, e.g. 
~\cite{HMN84,GM90,CM:FI2006}. Thus, given such a pattern at schema
specification time, it is possible to compile a simplification of
the integrity theory for all updates matching that pattern. For
instance, if constants $a$, $b$ and $c$ in the update of Example
\ref{ex:conflict-of-interests} were specified as simple placeholders
for constants (called \emph{dummy constants} in \cite{HMN84} and
\emph{parameters} in \cite{CM:FI2006}), and thus not assumed to be
necessarily different, a pre-test-based simplification would also
include an integrity constraint that checks that $a\neq b$.

Logic programming-based approaches such as \cite{SK88,KSS87} do not
take into account irrelevant clauses for refuting denial
constraints, even if they would take part in an unnoticed case of
inconsistency that has not been caused by the checked update but by
some earlier event. Moreover, such approaches do not exhibit any
explosive behavior as predicted classical logic in the presence of
inconsistency. In other words, query evaluation procedures based on
SL-resolution can fairly well be called inconsistency-tolerant or
``paraconsistent" in a procedural sense, as done, e.g., in
\cite{kowalski_book,lncs_2582}.
The declarative inconsistency
tolerance of simplifications for improving integrity checking has been studied in several works~\cite{DM:TKDE2011,DM:PPDP2008,DM:LPAR2006,DM:ADVIS2006,DM:SEBD2006,DM:QOIS2009,DM:FlexDBIST2007,DM:TDM2006,DM:FlexDBIST2006,DM:IGP2009}.

The related problem of restoring integrity of a database once
inconsistencies are discovered is tackled by calculating a
\emph{repair}, i.e., a consistent database that is as close as
possible to the original, inconsistent database. Since the seminal
contribution \cite{ABC00}, many authors have studied the problem of
providing consistent answers to queries posed to an inconsistent
database. These techniques certainly add to inconsistency tolerance
in databases, but cannot be directly used to detect inconsistencies
for integrity checking purposes (i.e., by posing integrity
constraints as queries).
Along the same lines, active rules have been considered as a means
to restore a consistent database
\cite{DBLP:conf/vldb/CeriW90,DBLP:journals/tods/CeriFPT94,Dec02}.

Last, we mention work on incomplete databases which also considers
integrity constraints that are not satisfied in a given database
state as something to be dealt with constructively, instead of
banning it from consideration, as most integrity checking methods do
\cite{Vardi86, vanMeyden_survey,CM:TPLP2010,M:CORR2013b}. However, that work is not
interested in integrity checking simplifications that could be used
in such databases. Rather, it is dealing with the issue of
satisfiability and its computational complexity, as related to an
open world assumption by which the space of possible ``closed
worlds" (i.e. database states without missing information) that
would satisfy integrity are studied. The theme of this paper is to
simplify the checking of integrity satisfaction in the presence of
inconsistency, not to ask for the satisfiability of integrity
constraints in the absence of complete information. (Basic
similarities and differences of satisfaction and satisfiability of
integrity are addressed in \cite{bryDeckerManthey88}.)

Relevant new directions of research regard all those areas where integrity constraints are used to characterize useful scenarios in which query answering plays an important role.
Among these, we mention
\emph{i)} \emph{access patterns}, which are constraints indicating which attributes of a relation schema are used as input and which ones are used as output~\cite{CM:APWEB2010,CM:EDBT2010,M:CRYPT2011,CM:ER2008,CM:ICDE2008,CCM:EROW2007,CMC:SEBD2007,CCM:JUCS2009},
\emph{ii)} top-$K$ queries, where the constraints specify a limit on the number of results that the query should return, including constraints on proximity or diversity~\cite{MT:TODS2012,MT:PVLDB2010,CCFMT:TODS2013,FMT:SIGMOD2012,DBLP:conf/sebd/FraternaliMT12,MT:TKDE2011,CMT:TECHREP2009,MT:DBRANK2010,SIMT:SIGMOD2011,IMT:SECO2009},
\emph{iii)} \emph{taxonomies} and context information, which may be used to pose constraint on the granularity of the data and to reason about it~\cite{MT:ER2010,MT:FQAS2009,MT:ITAIS2010,MT:ITAIS2009,MT:TECHREP2009,RMT:LID2011,CM:AAI2000}.

\section{Conclusion}\label{sec:conclusion}

We have discussed and compared the two main abstract families of methods that can be used to incrementally check integrity constraints: pre-tests and post-tests.
These are simplifications to be checked before or, respectively, after the update is executed (while update commitment is supposed to occur only after a successful check).
In order to not only talk about some selected, specific methods, albeit well-known ones, we have characterized declarative and procedural aspects of simplification-based integrity checking in a manner which is largely independent of concrete methods.
Unsurprisingly, pre-tests and post-tests are not interchangeable, not only in terms of the convenience of executing the ones or the others in practical situations (pre-tests may actually be preferred in case of updates to be rejected), but also of their semantic properties.
Somewhat surprisingly, however, their applicability is mostly asymmetric, even for the simple case of idempotent updates.

\bibliographystyle{abbrv}
{\small \bibliography{DM_VLDB06}}

\begin{thebibliography}{10}

\bibitem{AHV95}
S.~Abiteboul, R.~Hull, and V.~Vianu.
\newblock {\em Foundations of Databases}.
\newblock Addison-Wesley, 1995.

\bibitem{ABC00}
M.~Arenas, L.~E. Bertossi, and J.~Chomicki.
\newblock Querying database repairs using logic programs with exceptions.
\newblock In {\em FQAS 2000}, pages 27--41. Physica-Verlag Heidelberg New York,
  A Springer-Verlag Company, 2000.

\bibitem{Benedikt02}
M.~Benedikt, G.~Bruns, J.~Gibson, R.~Kuss, and A.~Ng.
\newblock {Automated Update Management for XML Integrity Constraints}.
\newblock In {\em Informal Proceedings of PLAN-X Workshop}, 2002.

\bibitem{DBLP:conf/pods/Bonatti04}
P.~A. Bonatti.
\newblock On the decidability of containment of recursive datalog queries -
  preliminary report.
\newblock In A.~Deutsch, editor, {\em Proceedings of the Twenty-third ACM
  SIGACT-SIGMOD-SIGART Symposium on Principles of Database Systems, June 14-16,
  2004, Paris, France}, pages 297--306. ACM, 2004.

\bibitem{bryDeckerManthey88}
F.~Bry, H.~Decker, and R.~Manthey.
\newblock A uniform approach to constraint satisfaction and constraint
  satisfiability in deductive databases.
\newblock In {\em EDBT}, pages 488--505, 1998.

\bibitem{CCM:EROW2007}
A.~Cal{\`\i}, D.~Calvanese, and D.~Martinenghi.
\newblock Optimization of query plans in the presence of access limitations.
\newblock In {\em Proc.\ of the ICDT~2007 Workshop on Emerging Research
  Opportunities in Web Data Management (EROW~2007)}, CEUR Electronic Workshop
  Proceedings, {\upshape\protect\url{http://ceur-ws.org/}}, pages 33--47, 2007.

\bibitem{CCM:JUCS2009}
A.~Cal{\`{i}}, D.~Calvanese, and D.~Martinenghi.
\newblock {Dynamic Query Optimization under Access Limitations and
  Dependencies}.
\newblock {\em Journal of Universal Computer Science}, 15(21):33--62, 2009.

\bibitem{CM:ER2008}
A.~Cal{\`{i}} and D.~Martinenghi.
\newblock {Conjunctive Query Containment under Access Limitations}.
\newblock In {\em Proceedings of Conceptual Modeling - ER 2008, 27th
  International Conference on Conceptual Modeling, Barcelona, Spain, October
  20-24, 2008}, pages 326--340, 2008.

\bibitem{CM:ICDE2008}
A.~Cal{\`{i}} and D.~Martinenghi.
\newblock {Querying Data under Access Limitations}.
\newblock In {\em Proceedings of the 24th International Conference on Data
  Engineering, ICDE 2008, April 7-12, 2008, Canc{\'u}n, M{\'e}xico}, pages
  50--59, 2008.

\bibitem{CM:APWEB2010}
A.~Cal\`{\i} and D.~Martinenghi.
\newblock {Optimizing Query Processing for the Hidden Web (Tutorial)}.
\newblock In {\em Advances in Web Technologies and Applications, Proceedings of
  the 12th Asia-Pacific Web Conference, APWeb 2010, Busan, Korea, 6-8 April
  2010}, page 397, 2010.

\bibitem{CM:TPLP2010}
A.~Cal{\`{i}} and D.~Martinenghi.
\newblock Querying incomplete data over extended er schemata.
\newblock {\em Theory and Practice of Logic Programming}, 10(3):291--329, 2010.

\bibitem{CM:EDBT2010}
A.~Cal\`{\i} and D.~Martinenghi.
\newblock Querying the deep web (tutorial).
\newblock In {\em EDBT 2010, 13th International Conference on Extending
  Database Technology, Lausanne, Switzerland, March 22-26, 2010, Proceedings},
  pages 724--727, 2010.

\bibitem{CMC:SEBD2007}
A.~Cal{\`{i}}, D.~Martinenghi, and D.~Carbotta.
\newblock {Query optimisation for web data sources: minimisation of the number
  of accesses (Extended Abstract)}.
\newblock In {\em Proceedings of the Fifteenth Italian Symposium on Advanced
  Database Systems, SEBD 2007, 17-20 June 2007, Torre Canne, Fasano, BR,
  Italy}, pages 316--323, 2007.

\bibitem{Calvanese:1998:DQC}
D.~Calvanese, G.~{De Giacomo}, and M.~Lenzerini.
\newblock On the decidability of query containment under constraints.
\newblock In {\em Proceedings of the seventeenth ACM SIGACT-SIGMOD-SIGART
  symposium on Principles of database systems}, pages 149--158. ACM Press,
  1998.

\bibitem{CDV03}
D.~Calvanese, G.~{De Giacomo}, and M.~Vardi.
\newblock Decidable containment of recursive queries.
\newblock In D.~Calvanese, M.~Lenzerini, and R.~Motwani, editors, {\em Database
  Theory - ICDT 2003, 9th International Conference, Siena, Italy, January 8-10,
  2003, Proceedings}, volume 2572 of {\em Lecture Notes in Computer Science},
  pages 330--345. Springer, 2003.

\bibitem{Carrico95}
S.~Carrico, B.~Ewbank, T.~G. Griffin, J.~Meale, and H.~Trickey.
\newblock A tool for developing safe and efficient database transactions.
\newblock In {\em XV International Switching Symposium of the World
  Telecomminications Congress. April, 1995}, pages 173--177, 1995.

\bibitem{CCFMT:TODS2013}
I.~Catallo, E.~Ciceri, P.~Fraternali, D.~Martinenghi, and M.~Tagliasacchi.
\newblock Top-k diversity queries over bounded regions.
\newblock {\em ACM Transactions on Database Systems}, 38(2):10, 2013.
\newblock Extended version of~\cite{FMT:SIGMOD2012}.

\bibitem{DBLP:journals/tods/CeriFPT94}
S.~Ceri, P.~Fraternali, S.~Paraboschi, and L.~Tanca.
\newblock Automatic generation of production rules for integrity maintenance.
\newblock {\em ACM Transactions on Database Systems (TODS)}, 19(3):367--422,
  1994.

\bibitem{CMT:TECHREP2009}
S.~Ceri, D.~Martinenghi, and M.~Tagliasacchi.
\newblock Cost-aware rank-join algorithms.
\newblock Technical report, Politecnico di Milano, 2009.

\bibitem{DBLP:conf/vldb/CeriW90}
S.~Ceri and J.~Widom.
\newblock Deriving production rules for constraint maintainance.
\newblock In D.~McLeod, R.~Sacks-Davis, and H.-J. Schek, editors, {\em 16th
  International Conference on Very Large Data Bases, August 13-16, 1990,
  Brisbane, Queensland, Australia, Proceedings}, pages 566--577. Morgan
  Kaufmann, 1990.

\bibitem{CM:AAI2000}
H.~Christiansen and D.~Martinenghi.
\newblock {Symbolic Constraints for Meta-Logic Programming}.
\newblock {\em Applied Artificial Intelligence}, 14(4):345--367, 2000.

\bibitem{CM:LOPSTR2003}
H.~Christiansen and D.~Martinenghi.
\newblock Simplification of database integrity constraints revisited: A
  transformational approach.
\newblock In {\em Logic Based Program Synthesis and Transformation, 13th
  International Symposium LOPSTR 2003, Uppsala, Sweden, August 25-27, 2003,
  Revised Selected Papers}, volume 3018 of {\em Lecture Notes in Computer
  Science}, pages 178--197. Springer, 2004.

\bibitem{CM:FoIKS2004}
H.~Christiansen and D.~Martinenghi.
\newblock Simplification of integrity constraints for data integration.
\newblock In {\em Foundations of Information and Knowledge Systems, Third
  International Symposium, FoIKS 2004, Wilhelminenburg Castle, Austria,
  February 17-20, 2004, Proceedings}, volume 2942 of {\em Lecture Notes in
  Computer Science}, pages 31--48. Springer, 2004.

\bibitem{CM:LPAR2005}
H.~Christiansen and D.~Martinenghi.
\newblock Incremental integrity checking: Limitations and possibilities.
\newblock In {\em Logic for Programming, Artificial Intelligence, and
  Reasoning, 12th International Conference, LPAR 2005, Montego Bay, Jamaica,
  December 2-6, 2005, Proceedings}, volume 3835 of {\em Lecture Notes in
  Computer Science}, pages 712–--727. Springer, 2005.

\bibitem{CM:FI2006}
H.~Christiansen and D.~Martinenghi.
\newblock {On Simplification of Database Integrity Constraints}.
\newblock {\em Fundamenta Informaticae}, 71(4):371--417, 2006.

\bibitem{CM:LAAIC2006}
H.~Christiansen and D.~Martinenghi.
\newblock On using simplification and correction tables for integrity
  maintenance in integrated databases.
\newblock In {\em Proceedings of the Second International Workshop on Logical
  Aspects and Applications of Integrity Constraints (LAAIC06) 8 September 2006,
  Krakow, Poland}, pages 569--576. IEEE Computer Society, 2006.

\bibitem{lncs_2582}
H.~Decker.
\newblock Historical and computational aspects of paraconsistency in view of
  the logic foundation of databases.
\newblock In L.~Bertossi, G.~Katona, K.-D. Schewe, and B.~Thalheim, editors,
  {\em Semantics in Databases}, volume 2582 of {\em LNCS}, pages 63--81.
  Springer, 2001.

\bibitem{Dec02}
H.~Decker.
\newblock Translating advanced integrity checking technology to sql.
\newblock In J.~H. Doorn and L.~C. Rivero, editors, {\em Database integrity:
  challenges and solutions}, pages 203--249. Idea Group Publishing, 2002.

\bibitem{DC94}
H.~Decker and M.~Celma.
\newblock A slick procedure for integrity checking in deductive databases.
\newblock In P.~{Van Hentenryck}, editor, {\em Logic Programming: Proc.\ of the
  11th International Conference on Logic Programming}, pages 456--469. MIT
  Press, Cambridge, MA, 1994.

\bibitem{DM:FlexDBIST2006}
H.~Decker and D.~Martinenghi.
\newblock Avenues to flexible data integrity checking.
\newblock In {\em Proceedings of the International Workshop on Flexible
  Database and Information System Technology (FlexDBIST-06) 6 September 2006,
  Krakow, Poland}, pages 425--429. IEEE Computer Society, 2006.

\bibitem{DM:SEBD2006}
H.~Decker and D.~Martinenghi.
\newblock {Can Integrity Tolerate Inconsistency? (Extended Abstract)}.
\newblock In {\em Proceedings of the Fourteenth Italian Symposium on Advanced
  Database Systems, SEBD 2006, Portonovo, Italy, June 18-21, 2006}, pages
  32--39, 2006.

\bibitem{DM:ADVIS2006}
H.~Decker and D.~Martinenghi.
\newblock Checking violation tolerance of approaches to database integrity.
\newblock In {\em Advances in Information Systems, 4th International
  Conference, ADVIS 2006, Izmir, Turkey, October 18-20, 2006, Proceedings},
  volume 4243 of {\em Lecture Notes in Computer Science}, pages 139--148.
  Springer, 2006.

\bibitem{DM:TDM2006}
H.~Decker and D.~Martinenghi.
\newblock Integrity checking for uncertain data.
\newblock In {\em Proceedings of the Second Twente Data Management Workshop on
  Uncertainty in Databases 6 June 2006, Enschede, The Netherlands}, CTIT
  Workshop Proceedings Series WP06-01, pages 41--48. University of Twente,
  2006.

\bibitem{DM:LPAR2006}
H.~Decker and D.~Martinenghi.
\newblock A relaxed approach to integrity and inconsistency in databases.
\newblock In {\em Logic for Programming, Artificial Intelligence, and
  Reasoning, 13th International Conference, LPAR 2006, Phnom Penh, Cambodia,
  November 13-17, 2006, Proceedings}, volume 4246 of {\em Lecture Notes in
  Computer Science}, pages 287--301. Springer, 2006.

\bibitem{DM:FlexDBIST2007}
H.~Decker and D.~Martinenghi.
\newblock Getting rid of straitjackets for flexible integrity checking.
\newblock In {\em Proceedings of the 2nd International Workshop on Flexible
  Database and Information System Technology (FlexDBIST-07)}, pages 360--364,
  2007.

\bibitem{DM:PPDP2008}
H.~Decker and D.~Martinenghi.
\newblock Classifying integrity checking methods with regard to inconsistency
  tolerance.
\newblock In {\em Proceedings of the 10th International ACM SIGPLAN Conference
  on Principles and Practice of Declarative Programming, July 15-17, 2008,
  Valencia, Spain}, pages 195--204, 2008.

\bibitem{DM:IGP2008}
H.~Decker and D.~Martinenghi.
\newblock Database integrity checking.
\newblock In M.~Khosrow-Pour, editor, {\em Encyclopedia of Information Science
  and Technology (Second Edition)}, volume~II, pages 961--966. Information
  Science Reference, 2008.

\bibitem{DM:IGP2009}
H.~Decker and D.~Martinenghi.
\newblock Inconsistency-tolerant integrity checking.
\newblock In V.~Ferraggine, J.~Doorn, and L.~Rivero, editors, {\em Handbook of
  Research on Innovations in Database Technologies and Applications: Current
  and Future Trends}, volume~II, chapter XXXVIII, pages 348--357. Information
  Science Reference, 2009.

\bibitem{DM:QOIS2009}
H.~Decker and D.~Martinenghi.
\newblock Modeling, measuring and monitoring the quality of information.
\newblock In {\em Proceedings of the 4th International Workshop on Quality of
  Information Systems (QoIS 2009)}, pages 212--221, 2009.

\bibitem{DM:TKDE2011}
H.~Decker and D.~Martinenghi.
\newblock {Inconsistency-tolerant Integrity Checking}.
\newblock {\em IEEE Transactions on Knowledge \& Data Engineering},
  23(2):218--234, 2011.

\bibitem{D76}
E.~W. Dijkstra.
\newblock {\em A Discipline of Programming}.
\newblock Prentice-Hall, 1976.

\bibitem{DBLP:journals/sigmod/DongS00}
G.~Dong and J.~Su.
\newblock {Incremental Maintenance of Recursive Views Using Relational
  Calculus/SQL.}
\newblock {\em SIGMOD Record}, 29(1):44--51, 2000.

\bibitem{Florescu:1998:QCC}
D.~Florescu, A.~Levy, and D.~Suciu.
\newblock Query containment for conjunctive queries with regular expressions.
\newblock In {\em {PODS} '98. Proceedings of the {ACM} {SIGACT--SIGMOD--SIGART}
  Symposium on Principles of Database Systems, June 1--3, 1998, Seattle,
  Washington}, pages 139--148, New York, NY 10036, USA, 1998. ACM Press.

\bibitem{DBLP:conf/sebd/FraternaliMT12}
P.~Fraternali, D.~Martinenghi, and M.~Tagliasacchi.
\newblock Efficient diversification of top-k queries over bounded regions.
\newblock In {\em SEBD}, pages 139--146, 2012.

\bibitem{FMT:SIGMOD2012}
P.~Fraternali, D.~Martinenghi, and M.~Tagliasacchi.
\newblock {Top-k bounded diversification}.
\newblock In {\em Proceedings of the 2012 ACM SIGMOD/PODS Conference -- SIGMOD
  2012, Scottsdale, Arizona, USA, May 20--24, 2012}, pages 421--432, 2012.

\bibitem{GM90}
J.~Grant and J.~Minker.
\newblock Integrity constraints in knowledge based systems.
\newblock In H.~Adeli, editor, {\em Knowledge Engineering Vol II,
  Applications}, pages 1--25. McGraw-Hill, 1990.

\bibitem{310709}
A.~Gupta and I.~S. Mumick, editors.
\newblock {\em Materialized views: techniques, implementations, and
  applications}.
\newblock MIT Press, 1999.

\bibitem{HMN84}
L.~Henschen, W.~McCune, and S.~Naqvi.
\newblock Compiling constraint-checking programs from first-order formulas.
\newblock In H.~Gallaire, J.~Minker, and J.-M. Nicolas, editors, {\em Advances
  In Database Theory, February 1-5, 1988, Los Angeles, California, USA},
  volume~2, pages 145--169. Plenum Press, New York, 1984.

\bibitem{H69}
C.~Hoare.
\newblock An axiomatic basis for computer programming.
\newblock {\em Communications of the ACM}, 12(10):576--580, 1969.

\bibitem{DBLP:conf/sigmod/HsuI85}
A.~Hsu and T.~Imielinski.
\newblock Integrity checking for multiple updates.
\newblock In S.~B. Navathe, editor, {\em Proceedings of the 1985 ACM SIGMOD
  International Conference on Management of Data, Austin, Texas, May 28-31,
  1985}, pages 152--168. ACM Press, 1985.

\bibitem{IMT:SECO2009}
I.~F. Ilyas, D.~Martinenghi, and M.~Tagliasacchi.
\newblock Rank-join algorithms for search computing.
\newblock In S.~Ceri and M.~Brambilla, editors, {\em Search Computing:
  Challenges and Directions}, pages 211--224. 2009.

\bibitem{kowalski_book}
R.~A. Kowalski.
\newblock {\em Logic for Problem Solving}.
\newblock Elsevier, 1979.

\bibitem{KSS87}
R.~A. Kowalski, F.~Sadri, and P.~Soper.
\newblock Integrity checking in deductive databases.
\newblock In P.~M. Stocker, W.~Kent, and P.~Hammersley, editors, {\em VLDB
  1987, Proceedings of 13th International Conference on Very Large Data Bases,
  September 1-4, 1987, Brighton, England}, pages 61--69. Morgan Kaufmann, 1987.

\bibitem{LL96}
S.~Y. Lee and T.~W. Ling.
\newblock Further improvements on integrity constraint checking for
  stratifiable deductive databases.
\newblock In T.~M. Vijayaraman, A.~P. Buchmann, C.~Mohan, and N.~L. Sarda,
  editors, {\em VLDB'96, Proceedings of 22th International Conference on Very
  Large Data Bases, September 3-6, 1996, Mumbai (Bombay), India}, pages
  495--505. Morgan Kaufmann, 1996.

\bibitem{LD98}
M.~Leuschel and D.~de~Schreye.
\newblock Creating specialised integrity checks through partial evaluation of
  meta-interpreters.
\newblock {\em Journal of Logic Programming}, 36(2):149--193, 1998.

\bibitem{LST87}
J.~W. Lloyd, L.~Sonenberg, and R.~W. Topor.
\newblock Integrity constraint checking in stratified databases.
\newblock {\em Journal of Logic Programming}, 4(4):331--343, 1987.

\bibitem{LT85}
J.~W. Lloyd and R.~W. Topor.
\newblock {A basis for deductive database systems II}.
\newblock {\em Journal of Logic Programming}, 30(1):55--67, 1986.

\bibitem{martinenghi2003simplification}
D.~Martinenghi.
\newblock A simplification procedure for integrity constraints.
\newblock \url{http://www.dat.ruc.dk/~dm/spic/index.html}, 2003.

\bibitem{M:ADBIS2004}
D.~Martinenghi.
\newblock Optimal database locks for efficient integrity checking.
\newblock In {\em Eighth East-European Conference on Advances in Databases and
  Information Systems (ADBIS 2004), Budapest, Hungary, 22-25 September 2004,
  local proceedings}, pages 64--77, 2004.

\bibitem{M:FQAS2004}
D.~Martinenghi.
\newblock Simplification of integrity constraints with aggregates and
  arithmetic built-ins.
\newblock In {\em Flexible Query Answering Systems, 6th International
  Conference, FQAS 2004, Lyon, France, June 24-26, 2004, Proceedings}, volume
  3055 of {\em Lecture Notes in Computer Science}, pages 348--361. Springer,
  2004.

\bibitem{M:PHD2005}
D.~Martinenghi.
\newblock {\em Advanced Techniques for Efficient Data Integrity Checking}.
\newblock PhD thesis, Roskilde University, Denmark, in {\em Datalogiske
  Skrifter}, 105, \url{http://www.ruc.dk/dat/forskning/skrifter/DS105.pdf},
  2005.

\bibitem{M:CRYPT2011}
D.~Martinenghi.
\newblock Access pattern.
\newblock In H.~C. van Tilborg and S.~Jajodiathe, editors, {\em Encyclopedia of
  Cryptography and Security (Second Edition)}, pages A17--A20. Springer, 2011.

\bibitem{M:CORR2013b}
D.~Martinenghi.
\newblock On the dependency on the size of the data when chasing under
  conceptual dependencies.
\newblock {\em CoRR}, submit/0863729, 2013.

\bibitem{MC:ADBIS2005}
D.~Martinenghi and H.~Christiansen.
\newblock Efficient integrity checking for databases with recursive views.
\newblock In {\em Advances in Databases and Information Systems, 9th East
  European Conference, ADBIS 2005, Tallinn, Estonia, September 12-15, 2005,
  Proceedings}, volume 3631 of {\em Lecture Notes in Computer Science}, pages
  109–--124. Springer, 2005.

\bibitem{MC:DEXA2005}
D.~Martinenghi and H.~Christiansen.
\newblock Transaction management with integrity checking.
\newblock In {\em 16th International Conference on Database and Expert Systems
  Applications}, volume 3588 of {\em Lecture Notes in Computer Science}, pages
  606–--615. Springer, 2005.

\bibitem{MCD:IGP2006}
D.~Martinenghi, H.~Christiansen, and H.~Decker.
\newblock Integrity checking and maintenance in relational and deductive
  databases - and beyond.
\newblock In Z.~Ma, editor, {\em Intelligent Databases: Technologies and
  Applications}, chapter~X, pages 238--285. Idea Group Publishing, 2006.

\bibitem{MT:PVLDB2010}
D.~Martinenghi and M.~Tagliasacchi.
\newblock {Proximity Rank Join}.
\newblock {\em Proceedings of the VLDB Endowment}, 3(1):352--363, 2010.

\bibitem{MT:DBRANK2010}
D.~Martinenghi and M.~Tagliasacchi.
\newblock Top-k pipe join.
\newblock In {\em Proceedings of the 4th International Workshop on Ranking in
  Databases (DBRank 2010)}, pages 16--19. IEEE Computer Society Press, 2010.

\bibitem{MT:TKDE2011}
D.~Martinenghi and M.~Tagliasacchi.
\newblock {Cost-Aware Rank Join with Random and Sorted Access}.
\newblock {\em IEEE Transactions on Knowledge \& Data Engineering},
  24(12):2143--2155, 2012.

\bibitem{MT:TODS2012}
D.~Martinenghi and M.~Tagliasacchi.
\newblock {Proximity measures for rank join}.
\newblock {\em ACM Transactions on Database Systems}, 37(1), 2012.
\newblock Extended version of~\cite{MT:PVLDB2010}.

\bibitem{MT:ITAIS2009}
D.~Martinenghi and R.~Torlone.
\newblock A logical approach to context-aware databases.
\newblock In {\em Proceedings of the 6th Conference of the Italian Chapter of
  AIS (itAIS 2009)}, 2009.
\newblock Post-proceedings published in~\cite{MT:ITAIS2010}.

\bibitem{MT:TECHREP2009}
D.~Martinenghi and R.~Torlone.
\newblock A model and a language for context-aware databases.
\newblock Technical Report RT-DIA-152-2009, Dipartimento di Informatica e
  Automazione, Universit\`a degli studi Roma Tre, July 2009.

\bibitem{MT:FQAS2009}
D.~Martinenghi and R.~Torlone.
\newblock Querying context-aware databases.
\newblock In {\em Flexible Query Answering Systems, 8th International
  Conference, FQAS 2009, Roskilde, Denmark, October 26-28, 2009. Proceedings},
  pages 76--87, 2009.

\bibitem{MT:ITAIS2010}
D.~Martinenghi and R.~Torlone.
\newblock A logical approach to context-aware databases.
\newblock In A.~D'Atri, M.~D. Marco, A.~Braccini, and F.~Cabiddu, editors, {\em
  Management of the Interconnected World}, pages 211--220. 2010.
\newblock Extended version of~\cite{MT:ITAIS2009}.

\bibitem{MT:ER2010}
D.~Martinenghi and R.~Torlone.
\newblock {Querying Databases with Taxonomies}.
\newblock In {\em Proceedings of Conceptual Modeling - ER 2010, 29th
  International Conference on Conceptual Modeling, Vancouver, BC, Canada,
  November 1-4, 2010}, pages 377--390, 2010.

\bibitem{Nic82}
J.-M. Nicolas.
\newblock Logic for improving integrity checking in relational data bases.
\newblock {\em Acta Informatica}, 18:227--253, 1982.

\bibitem{DBLP:conf/vldb/Olive91}
A.~Oliv{\'e}.
\newblock Integrity constraints checking in deductive databases.
\newblock In G.~M. Lohman, A.~Sernadas, and R.~Camps, editors, {\em 17th
  International Conference on Very Large Data Bases, September 3-6, 1991,
  Barcelona, Catalonia, Spain, Proceedings}, pages 513--523. Morgan Kaufmann,
  1991.

\bibitem{Qia88}
X.~Qian.
\newblock An effective method for integrity constraint simplification.
\newblock In {\em Proceedings of the Fourth International Conference on Data
  Engineering, February 1-5, 1988, Los Angeles, California, USA}, pages
  338--345. IEEE Computer Society, 1988.

\bibitem{KH:VLDBJ94}
K.~Ramamohanarao and J.~Harland.
\newblock An introduction to deductive database languages and systems.
\newblock {\em VLDB Journal}, 3(2):107--122, 1994.

\bibitem{RMT:LID2011}
A.~Rauseo, D.~Martinenghi, and L.~Tanca.
\newblock Context through answer set programming (abstract).
\newblock In {\em Proceedings of the EDBT 2011 workshops, 3rd International
  Workshop on Logic in Databases (LID 2011)}, page~58, 2011.

\bibitem{SK88}
F.~Sadri and R.~Kowalski.
\newblock A theorem-proving approach to database integrity.
\newblock In J.~Minker, editor, {\em Foundations of Deductive Databases and
  Logic Programming}, pages 313--362. Morgan Kaufmann, Los Altos, CA, 1988.

\bibitem{S95}
R.~Selj{\'e}e.
\newblock A new method for integrity constraint checking in deductive database.
\newblock {\em Data \& Knowledge Engineering}, 15(1):63--102, 1995.

\bibitem{SS99}
R.~Selj{\'e}e and H.~C.~M. de~Swart.
\newblock Three types of redundancy in integrity checking: An optimal solution.
\newblock {\em Data \& Knowledge Engineering}, 30(2):135--151, 1999.

\bibitem{SIMT:SIGMOD2011}
M.~A. Soliman, I.~F. Ilyas, D.~Martinenghi, and M.~Tagliasacchi.
\newblock {Ranking with Uncertain Scoring Functions: Semantics and Sensitivity
  Measures}.
\newblock In {\em Proceedings of the 2011 ACM SIGMOD/PODS Conference -- SIGMOD
  2011, Athens, Greece, June 12--16, 2011}, pages 805--816, 2011.

\bibitem{vanMeyden_survey}
R.~van~der Meyden.
\newblock Logical approaches to incomplete information: A survey.
\newblock In {\em Logics for Databases and Information Systems}, pages
  307--356, 1998.

\bibitem{Vardi86}
M.~Y. Vardi.
\newblock On the integrity of databases with incomplete information.
\newblock In {\em PODS}, pages 252--266, 1986.

\end{thebibliography}
\end{document}